\newtheorem{thm}{Theorem}[section]
\newtheorem{cor}[thm]{Corollary}
\theoremstyle{definition}
\theoremstyle{remark}
\newtheorem{rem}[thm]{Remark}
\numberwithin{equation}{section}
\begin{document}

\title[Universal Monotonicity of Eigenvalue Moments...]{Universal Monotonicity of Eigenvalue Moments and Sharp Lieb-Thirring Inequalities}%
\author{Joachim Stubbe}%
\address{Joachim Stubbe, EPFL, IMB-FSB, Station 8, CH-1015 Lausanne, Switzerland}%
\email{Joachim.Stubbe@epfl.ch}%
\thanks{The author would like to thank Evans M. Harrell II for fruitful discussions and many helpful comments.
Valuable suggestions by R. Frank and E.H. Lieb are also gratefully acknowledged}%
\subjclass{81Q10, 35P15, 35P20}%
\keywords{Universal bounds for eigenvalues, spectral gap, Phase
space bounds, Lieb-Thirring inequalities, Schr\"{o}dinger
operators}

\date{13 October 2008}%
\begin{abstract}
We show that phase space bounds on the eigenvalues of
Schr\"{o}dinger operators can be derived from universal bounds
recently obtained by E. M. Harrell and the author via a
monotonicity property with respect to coupling constants. In
particular, we provide a new proof of sharp Lieb-Thirring
inequalities.
\end{abstract}
\maketitle
\section{Introduction}
We consider the eigenvalues $E_j(\alpha)$ of a one-parameter
family of Schr\"{o}dinger operators
\begin{equation}\label{H-alpha}
    H(\alpha)=-\alpha\Delta+V(x)
\end{equation}
on $\mathbb{R}^d$ for constants $\alpha>0$. For negative
potentials $V(x)$ vanishing at infinity it is a well-known fact
that for all $\sigma\geq 0$
\begin{equation}\label{sc-limit}
    \underset{\alpha\rightarrow 0+}{\lim}\alpha^{\frac{d}{2}}\;
    \sum_{E_j(\alpha)<0}(-E_j(\alpha))^{\sigma}=L_{\sigma,d}^{cl}\int_{\mathbb{R}^d}(-V(x))^{\sigma+\frac{d}{2}}\;dx
\end{equation}
with $L_{\sigma,d}^{cl}$, called the classical constant, given by
\begin{equation}\label{sc-constant}
    L_{\sigma,d}^{cl}=(4\pi)^{-\frac{d}{2}}\frac{\Gamma(\sigma+1)}{\Gamma(\sigma+\frac{d}{2}+1)}.
\end{equation}
Lieb-Thirring inequalities are inequalities of the form
\begin{equation}\label{LT-ineq}
    \alpha^{\frac{d}{2}}\;
    \sum_{E_j(\alpha)<0}(-E_j(\alpha))^{\sigma}\leq L_{\sigma,d}\int_{\mathbb{R}^d}(-V(x))^{\sigma+\frac{d}{2}}\;dx
\end{equation}
for some constant $L_{\sigma,d}\geq L_{\sigma,d}^{cl}$ and are
widely discussed in the literature (see e.g.
\cite{BlSt,Hu,LaWei2}). A longstanding question is when
\eqref{LT-ineq} holds with $L_{\sigma,d}= L_{\sigma,d}^{cl}$. The
most general result is due to Laptev and Weidl \cite{LaWei1} who
proved that $L_{\sigma,d}= L_{\sigma,d}^{cl}$ for all $\sigma\geq
\frac{3}{2}$ and $d\geq 1$. Their proof is based on a dimensional
reduction of Schr\"{o}dinger operators with operator valued
potentials which allows them to make use of the bound for
$\sigma=\frac{3}{2},d=1$ which has been first proven by Lieb and
Thirring \cite{LT1}. For simplified proof see also \cite{BeLo}. On
the other hand, by analyzing the spectra of harmonic oscillators
Helffer and Robert have shown that $L_{\sigma,d}>
L_{\sigma,d}^{cl}$ for $\sigma<1$ while de la Breteche showed that
these spectra are in agreement with the conjecture $L_{\sigma,d}=
L_{\sigma,d}^{cl}$ for $\sigma\geq 1$ \cite{dlB}

Recently, Harrell and the author have established universal trace
inequalities for abstract self-adjoint operators $H$ modelled on
Schr\"{o}dinger operators \cite{HaSt2}. If $G$ is another
self-adjoint operator, then under suitable domain conditions (see
Corollary 2.3 of \cite{HaSt2})
   \begin{align}\label{HS-quadratic}
    & \sum_{E_j\in J}{(z-E_j)^2\,\langle[G, [H,G]]\phi_j,\phi_j\rangle -
 2(z-E_j)\,\langle[H,G]\phi_j,[H,G]\phi_j\rangle}
    \\
    &\qquad =2 \sum_{E_j\in J}\int_{\kappa \in J^c}
    (z-E_j)(z-\kappa)(\kappa-E_j) d G_{j \kappa}^2\nonumber
\end{align}
where $J$ denotes a subset of the discrete spectrum of $H$ and
$J^c$ its complement. Exploiting this identity we prove the
following
\begin{thm}\label{main-theorem} Suppose $V\leq 0$ vanishes at infinity. Then the
mapping
\begin{equation}\label{diff-ineq-2}
   \alpha\mapsto\alpha^{\frac{d}{2}}\;
    \sum_{E_j(\alpha)<0}(-E_j(\alpha))^{2}
\end{equation}
is non increasing for all $\alpha>0$. Consequently
\begin{equation}\label{LT-2-sharp}
     \alpha^{\frac{d}{2}}\;
    \sum_{E_j(\alpha)<0}(-E_j(\alpha))^{2}\leq
    L_{2,d}^{cl}\int_{\mathbb{R}^d}(-V(x))^{2+\frac{d}{2}}\;dx
\end{equation}
for all $\alpha>0$.
\end{thm}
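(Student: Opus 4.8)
The plan is to feed the universal identity \eqref{HS-quadratic} with the position operators $G=x_i$, $i=1,\dots,d$, evaluated at $z=0$ and with $J=\{E_j(\alpha)<0\}$ the negative discrete spectrum, and to convert the resulting trace inequality into a differential inequality for the function in \eqref{diff-ineq-2}. First I would record the two commutators for $H(\alpha)=-\alpha\Delta+V$: since $[H,x_i]=-2\alpha\partial_i$, one finds $[x_i,[H,x_i]]=2\alpha$ (a scalar multiple of the identity), so $\langle[G,[H,G]]\phi_j,\phi_j\rangle=2\alpha$, and summing over $i$ yields $2\alpha d$. Likewise $\langle[H,x_i]\phi_j,[H,x_i]\phi_j\rangle=4\alpha^2\norm{\partial_i\phi_j}^2$, and summing over $i$ gives $4\alpha T_j$, where $T_j:=\langle-\alpha\Delta\phi_j,\phi_j\rangle\ge 0$ denotes the kinetic energy of the $j$-th bound state.

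The next step is to pin down the sign of the right-hand side of \eqref{HS-quadratic}. With $z=0$, any $E_j\in J$ satisfies $E_j<0$ while any $\kappa\in J^c$ satisfies $\kappa\ge 0$, so the weight $(z-E_j)(z-\kappa)(\kappa-E_j)=(-E_j)(-\kappa)(\kappa-E_j)$ is the product of a positive, a nonpositive, and a positive factor, hence $\le 0$. Summing \eqref{HS-quadratic} over $i=1,\dots,d$ and dividing by $4\alpha>0$ therefore produces the differential inequality
\[
   \frac{d}{2}\sum_{E_j<0}E_j^2 \;+\; 2\sum_{E_j<0}E_j T_j \;\le\; 0 .
\]

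To link this with \eqref{diff-ineq-2}, I would invoke the Hellmann–Feynman theorem: since $\partial_\alpha H(\alpha)=-\Delta$, we have $E_j'(\alpha)=\langle-\Delta\phi_j,\phi_j\rangle=T_j/\alpha$. Writing $S_2(\alpha)=\sum_{E_j<0}(-E_j)^2=\sum_{E_j<0}E_j^2$ and differentiating the product,
\[
   \frac{d}{d\alpha}\bigl(\alpha^{d/2}S_2(\alpha)\bigr)
   =\alpha^{\frac{d}{2}-1}\Bigl(\tfrac{d}{2}\sum_{E_j<0}E_j^2+2\sum_{E_j<0}E_j T_j\Bigr)\le 0,
\]
because the bracket is exactly the quantity bounded above. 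This establishes that $\alpha\mapsto\alpha^{d/2}S_2(\alpha)$ is non-increasing. The sharp bound \eqref{LT-2-sharp} then follows: as a non-increasing function on $(0,\infty)$ takes its supremum in the limit $\alpha\to 0^+$, I integrate up from $0^+$ and use the semiclassical limit \eqref{sc-limit} with $\sigma=2$, which identifies that limiting value with $L_{2,d}^{cl}\int_{\mathbb{R}^d}(-V)^{2+d/2}\,dx$.

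The part I expect to require the most care is the rigorous justification of the two formal operations. I must verify the domain hypotheses of Corollary 2.3 of \cite{HaSt2} for the unbounded multipliers $x_i$; this should reduce to the exponential decay of eigenfunctions with $E_j<0$ when $V$ vanishes at infinity, which guarantees $x_i\phi_j$ lies in the relevant domains. I also need to control eigenvalues that cross the threshold and possible degeneracies as $\alpha$ varies. Here the exponent $\sigma=2$ is precisely what makes the argument clean: both $(-E_j)^2$ and its $\alpha$-derivative vanish to second order as $E_j\to 0^-$, so no boundary contributions appear when eigenvalues enter or leave $J$ and $S_2$ is continuously differentiable, while degeneracies are handled by the trace (summed) formulation together with analytic perturbation theory.
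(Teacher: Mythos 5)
Your proposal is correct and follows essentially the same route as the paper: the quadratic universal identity with $G=x_i$ summed over coordinates and evaluated at $z=0$ has a nonpositive right-hand side, and combining the resulting inequality with the Feynman--Hellmann relation $E_j'(\alpha)=\|\nabla\phi_j\|^2$ yields $\frac{d}{d\alpha}\bigl(\alpha^{d/2}\sum(-E_j)^2\bigr)\le 0$, after which the bound follows from the semiclassical limit as $\alpha\to 0^+$. The only cosmetic differences are that the paper rederives the identity from its own trace formula rather than citing Corollary 2.3 of the earlier work directly, and it sidesteps your domain and threshold-crossing concerns by reducing to $V\in C_0^\infty$ and working on the intervals between the coupling values where new eigenvalues appear.
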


The link between universal inequalities and semiclassical
estimates has been first made in \cite{HaSt} where it has been
shown for the Dirichlet Laplacian $-\Delta_D$ on a bounded domain
$D\in\mathbb{R}^d$ that the mapping
\begin{equation}\label{Dirichlet-heat-kernel-trace}
   t\mapsto t^{\frac{d}{2}}\;\text{tr}(e^{-t\Delta_D})
\end{equation}
is always decreasing and therefore bounded by its semiclassical
limit, that is
\begin{equation}\label{Dirichlet-heat-kernel-trace-estimate}
    \text{tr}(e^{-t\Delta_D})\leq (4\pi t)^{-\frac{d}{2}} |D|.
\end{equation}
Harrell and Hermi have extended this technique to Riesz means of
the Dirichlet Laplacian \cite{HaHe}. In \cite{HaSt2} it has been
pointed out that the monotonicity of mappings like
\eqref{Dirichlet-heat-kernel-trace} is a universal property of a
large family of "trace-controllable" functions (as precisely
defined in \cite{HaSt2}) of Schr\"{o}dinger operators and we shall
derive in the present paper a corresponding universal property of
one parameter families of Schr\"{o}dinger operators. Our second
result extends this property to Schr\"{o}dinger operators of the
form \eqref{H-alpha} with confining potentials $V(x)$ such that
\begin{equation}\label{exp-V}
    \int_{\mathbb{R}^d}e^{-tV(x)}\;dx<\infty
\end{equation}
for all $t>0$. We provide a monotonicity result implying the
Golden-Thompson inequality for Schr\"{o}dinger operators
\eqref{H-alpha}:

\begin{thm}\label{main-theorem-2} If \eqref{exp-V} holds, then for all
$t>0$ the mapping
\begin{equation}\label{diff-ineq-3}
     \alpha\mapsto\alpha^{\frac{d}{2}}\;\text{tr}\,(e^{-tH(\alpha)})
\end{equation}
is non increasing for all $\alpha>0$. Consequently, for all
$\alpha>0$,
\begin{equation}\label{GT-sharp}
    \text{tr}\,(e^{-tH(\alpha)})\leq (4\pi \alpha t)^{-\frac{d}{2}}\int_{\mathbb{R}^d}e^{-tV(x)}\;dx<\infty.
\end{equation}
\end{thm}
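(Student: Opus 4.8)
The plan is to mirror the argument behind Theorem~\ref{main-theorem}: I would first prove the monotonicity in \eqref{diff-ineq-3} by differentiating in $\alpha$ and reducing it to a single pointwise eigenvalue inequality extracted from \eqref{HS-quadratic}, and then obtain \eqref{GT-sharp} by comparison with the $\alpha\to0^+$ limit. Write $Z(\alpha)=\text{tr}\,(e^{-tH(\alpha)})=\sum_j e^{-tE_j(\alpha)}$; under \eqref{exp-V} the potential is bounded below, $H(\alpha)$ has compact resolvent and purely discrete spectrum $E_1(\alpha)\le E_2(\alpha)\le\cdots\to+\infty$, and $e^{-tH(\alpha)}$ is trace class for every $t>0$. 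Since $\partial_\alpha H(\alpha)=-\Delta$, the Feynman--Hellmann relation gives $E_j'(\alpha)=\langle-\Delta\phi_j,\phi_j\rangle=\norm{\nabla\phi_j}^2=T_j/\alpha$, where $T_j:=\alpha\norm{\nabla\phi_j}^2=E_j-\langle V\phi_j,\phi_j\rangle\ge0$ is the kinetic energy of $\phi_j$. Differentiating the (locally uniformly convergent) series term by term,
\begin{equation*}
\frac{d}{d\alpha}\bigl(\alpha^{d/2}Z(\alpha)\bigr)=\alpha^{\frac d2-1}\Bigl(\tfrac d2\sum_j e^{-tE_j}-t\sum_j T_j e^{-tE_j}\Bigr)=-\alpha^{\frac d2-1}\sum_j\bigl(tT_j-\tfrac d2\bigr)e^{-tE_j},
\end{equation*}
so the asserted monotonicity is equivalent to $\sum_j\bigl(tT_j-\tfrac d2\bigr)e^{-tE_j}\ge0$.

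The engine is a pointwise universal inequality. I would apply \eqref{HS-quadratic} with $G=x_i$, the $i$-th coordinate multiplication, for which a short computation gives $[H(\alpha),x_i]=-2\alpha\partial_i$ and $[x_i,[H(\alpha),x_i]]=2\alpha$, so that
\begin{equation*}
\sum_{i=1}^d\langle[x_i,[H,x_i]]\phi_j,\phi_j\rangle=2\alpha d,\qquad \sum_{i=1}^d\langle[H,x_i]\phi_j,[H,x_i]\phi_j\rangle=4\alpha^2\norm{\nabla\phi_j}^2=4\alpha T_j.
\end{equation*}
Fixing $z$ outside the spectrum and choosing $J=\set{E_j:E_j<z}$, for $E_j\in J$ and $\kappa\in J^c$ one has $z-E_j>0$, $z-\kappa<0$, $\kappa-E_j>0$, so the right-hand side of \eqref{HS-quadratic} is $\le0$. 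Summing \eqref{HS-quadratic} over $i=1,\dots,d$ and dividing by $2\alpha$ then yields the pointwise bound
\begin{equation}\label{pointwise}
\sum_{E_j<z}(z-E_j)\bigl(4T_j-d(z-E_j)\bigr)\ge0,
\end{equation}
valid for all $z$ (at spectral points by continuity, the added term vanishing there).

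It then remains to integrate \eqref{pointwise} against the positive weight $e^{-tz}$. Since its left-hand side is nonnegative for every $z$, Tonelli's theorem lets me interchange the $z$-integral with the sum over $j$; substituting $u=z-E_j$ and using $\int_0^\infty u e^{-tu}\,du=t^{-2}$ and $\int_0^\infty u^2 e^{-tu}\,du=2t^{-3}$ gives
\begin{equation*}
0\le\int_{\Real}e^{-tz}\sum_{E_j<z}(z-E_j)\bigl(4T_j-d(z-E_j)\bigr)\,dz=\frac{4}{t^3}\sum_j\bigl(tT_j-\tfrac d2\bigr)e^{-tE_j},
\end{equation*}
which is precisely the inequality isolated above. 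Hence $\alpha\mapsto\alpha^{d/2}Z(\alpha)$ is non-increasing.

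Being non-increasing, $\alpha^{d/2}Z(\alpha)$ is bounded above by $\lim_{\alpha\to0^+}\alpha^{d/2}\text{tr}\,(e^{-tH(\alpha)})$, and standard semiclassical asymptotics for the heat trace (the analogue of \eqref{sc-limit}) identify this limit as $(4\pi t)^{-d/2}\int_{\Real^d}e^{-tV(x)}\,dx$, which is finite by \eqref{exp-V}; this gives \eqref{GT-sharp}. I expect the main obstacle to be analytic rather than algebraic: justifying the term-by-term differentiation of $Z(\alpha)$ (via analyticity of the family $H(\alpha)$ and local uniform convergence of $\sum_j E_j e^{-tE_j}$, which holds since $e^{-tH}$ and $He^{-tH}$ are trace class and $0\le T_j\le E_j-\inf V$), checking that \eqref{HS-quadratic} applies verbatim in the purely discrete confining setting, and establishing the semiclassical limit above.
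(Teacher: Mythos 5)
Your argument is correct, but it reaches the key inequality by a genuinely different route from the paper. The paper applies the trace formula \eqref{HS-trace} directly with $f(E)=e^{-tE}$ and exploits the concavity of $f'(E)=-te^{-tE}$ to bound $\int_0^1 f'(sE_j+(1-s)E_k)\,ds$ from below by $\tfrac12 f'(E_j)+\tfrac12 f'(E_k)$; combined with $T_{jk}\ge 0$ and the symmetry of $T_{jk}$ this yields $d\sum_j e^{-tE_j}-2\alpha t\sum_j T_j e^{-tE_j}\le 0$ in one stroke, and the Feynman--Hellmann theorem finishes as in Theorem \ref{main-theorem}. You instead take only the quadratic-case inequality \eqref{HS-quadratic-bis} (the sign of the right side of \eqref{HS-quadratic} for $J=\{E_j<z\}$) and average it over $z$ against the weight $e^{-tz}\,dz$ --- an Aizenman--Lieb-type transference performed at the level of the differential inequality rather than of the final bound --- landing on exactly the same inequality $\sum_j(tT_j-\tfrac d2)e^{-tE_j}\ge 0$ (your $T_j$ is $\alpha$ times the paper's, which accounts for the cosmetic differences). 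Your route has the virtue of needing nothing about the test function beyond the single quadratic case, and it makes transparent why the heat-trace monotonicity is inherited from the $\sigma=2$ moment monotonicity; the paper's route is shorter and generalizes immediately to any $f$ with concave $f'$ (as in the Corollary of Section 3). Two small points: the interchange of $\sum_j$ and $\int dz$ is a Fubini step, not Tonelli, since the individual summands $e^{-tz}(z-E_j)\bigl(4T_j-d(z-E_j)\bigr)$ change sign; the absolute summability you already invoke ($\sum_j(1+T_j)e^{-tE_j}<\infty$ from trace-class properties of $e^{-tH}$ and $He^{-tH}$) is what justifies it. Also, \eqref{exp-V} does not literally force $V$ to be bounded below, so the bound $T_j\le E_j-\inf V$ should be read with the same implicit regularity assumptions the paper itself makes ("provided all sums are finite").
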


\section{Proof of Main Results}
The key for proving our main results is the trace formula for self
adjoint operators proved in \cite{HaSt2}. For convenience we
reformulate this result for the operator $H(\alpha)$ in a slightly
different, and as we believe, in a more transparent way. To make
the present paper self consistent we give an elementary proof of
the trace formula. For simplicity, we consider only the case of
purely discrete spectra (more relevant for Theorem
\ref{main-theorem-2}). In the presence of continuous spectrum one
uses the spectral integral as in \cite{HaSt2}.
\begin{thm}[Trace formula for $H(\alpha)$]\label{HS-trace-formula} Suppose that
$H(\alpha)$ given in \eqref{H-alpha} has a spectrum consisting of
eigenvalues $E_k=E_k(\alpha)$ with associated eigenfunctions
$\phi_k$ forming an orthonormal basis of the underlying Hilbert
space $L^2(\mathbb{R}^d)$. Let $f:\mathbb{R}\rightarrow
\mathbb{R}$ be a $C^1$-function. Then
\begin{equation}\label{HS-trace}
    d\sum_{E_j}f(E_j)+2\alpha\sum_{E_j}\sum_{E_k}T_{jk}\int_0^1f'(sE_j+(1-s)E_k)\;ds=0
\end{equation}
provided all sums are finite where
\begin{equation}\label{T-jk}
    T_{jk}=T_{kj}=\bigg|\int_{\mathbb{R}^d}\phi_j\nabla\overline{\phi}_k\;dx\bigg|^2
\end{equation}
denote the kinetic energy matrix elements.
\end{thm}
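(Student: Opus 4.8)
The plan is to re-derive the trace formula elementarily by specializing the abstract commutator identity \eqref{HS-quadratic} to the coordinate multiplication operators $G=x_\ell$, $\ell=1,\dots,d$, and then summing over $\ell$. First I would record the two commutators that drive everything. Since $V$ commutes with multiplication by $x_\ell$, one has $[H(\alpha),x_\ell]=-\alpha[\Delta,x_\ell]=-2\alpha\,\partial_\ell$, and hence the double commutator is the scalar operator $[x_\ell,[H(\alpha),x_\ell]]=-2\alpha[x_\ell,\partial_\ell]=2\alpha I$. Summing over $\ell$ gives $\sum_{\ell}[x_\ell,[H(\alpha),x_\ell]]=2\alpha d\,I$. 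These are the only structural facts about $H(\alpha)$ that will be needed.

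The next step is to turn these operator identities into two scalar relations among matrix elements in the eigenbasis. Writing $(x_\ell)_{jk}=\langle\phi_j,x_\ell\phi_k\rangle$ and using $\langle\phi_j,[H,x_\ell]\phi_k\rangle=(E_j-E_k)(x_\ell)_{jk}$ together with $[H,x_\ell]=-2\alpha\,\partial_\ell$, I obtain $\langle\phi_j,\partial_\ell\phi_k\rangle=\tfrac{E_k-E_j}{2\alpha}(x_\ell)_{jk}$. Since $T_{jk}=\sum_\ell|\langle\phi_j,\partial_\ell\phi_k\rangle|^2$ follows directly from \eqref{T-jk} by complex conjugation, this yields the first key relation
\[
   T_{jk}=\frac{(E_j-E_k)^2}{4\alpha^2}\,X_{jk},\qquad X_{jk}:=\sum_{\ell}\abs{(x_\ell)_{jk}}^2,
\]
so in particular $T_{jk}$ vanishes whenever $E_j=E_k$. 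For the second relation I take the diagonal expectation of the summed double commutator in $\phi_j$ and insert a resolution of the identity $\sum_k\ket{\phi_k}\bra{\phi_k}=I$; using $\langle\phi_k,[H,x_\ell]\phi_j\rangle=(E_k-E_j)(x_\ell)_{kj}$ this produces the sum rule
\[
   \sum_{k}(E_k-E_j)\,X_{jk}=d\alpha\qquad\text{for every }j .
\]
This is the single place where completeness of $\{\phi_k\}$ and convergence of the sum enter.

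It remains to assemble the two relations. The inner integral is a divided difference, $\int_0^1 f'(sE_j+(1-s)E_k)\,ds=\tfrac{f(E_j)-f(E_k)}{E_j-E_k}$ for $E_j\neq E_k$ (and $f'(E_j)$ otherwise), so substituting the first relation collapses the coupling $2\alpha T_{jk}$ into $\tfrac{1}{2\alpha}(E_j-E_k)\bigl(f(E_j)-f(E_k)\bigr)X_{jk}$. Because $X_{jk}$ and the factor $(E_j-E_k)\bigl(f(E_j)-f(E_k)\bigr)$ are both symmetric under $j\leftrightarrow k$, a relabelling gives
\[
   \sum_{j,k}(E_j-E_k)\bigl(f(E_j)-f(E_k)\bigr)X_{jk}=2\sum_{j}f(E_j)\sum_{k}(E_j-E_k)X_{jk},
\]
and the sum rule turns the inner sum into $-d\alpha$. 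The prefactors $2\alpha$ and $\tfrac{1}{2\alpha}$ cancel, leaving the second term of \eqref{HS-trace} equal to $-d\sum_j f(E_j)$; adding $d\sum_j f(E_j)$ yields $0$, which is the asserted identity.

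The genuinely delicate points are analytic rather than algebraic. The identity $[H,x_\ell]=-2\alpha\,\partial_\ell$ and the reading of $T_{jk}$ as $\sum_\ell\abs{\langle\phi_j,\partial_\ell\phi_k\rangle}^2$ rely on integration by parts with no boundary contribution, legitimate for $L^2$-eigenfunctions of $H(\alpha)$ but to be justified on the appropriate operator domain; likewise the insertion of the resolution of the identity and the interchange of the order of summation both require the absolute convergence hypothesized in the statement. I expect the main obstacle to be precisely this control of domains and convergence: for purely discrete spectra with the stated finiteness it is routine, but it is the step that must be stated with care, and in the presence of continuous spectrum one would replace the sums by the spectral integrals used in \cite{HaSt2}.
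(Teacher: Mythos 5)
Your proof is correct and follows essentially the same route as the paper: the gap formula $(E_k-E_j)\langle\phi_k,x_\ell\phi_j\rangle=-2\alpha\langle\phi_k,\partial_\ell\phi_j\rangle$, a sum rule coming from canonical commutation plus completeness of the eigenbasis, symmetrization of the double sum, and the divided-difference representation of the integral. The only cosmetic difference is that you state the sum rule in terms of the position matrix elements $X_{jk}$ (as the diagonal expectation of $[x_\ell,[H,x_\ell]]=2\alpha$) and convert to $T_{jk}$ afterwards, whereas the paper writes the identical relation directly as $d=4\alpha\sum_k T_{jk}/(E_k-E_j)$.
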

\begin{proof}
Let $x_a$, $a=1,\ldots,d$ denote cartesian coordinates in
$\mathbb{R}^d$ and $D_a=\partial/\partial x_a$. The first identity
is due to canonical commutation (or integration by parts) and the
completeness of eigenfunctions. Indeed, for all $j$
\begin{equation}\label{canon-comm}
\begin{split}
    1&=-\int_{\mathbb{R}^d}x_a\phi_jD_a\overline{\phi}_j\;dx-\int_{\mathbb{R}^d}x_a\overline{\phi}_jD_a{\phi}_j\;dx\\
    &=-\sum_{k}\int_{\mathbb{R}^d}x_a\phi_j\overline{\phi}_k\;dx\int_{\mathbb{R}^d}\phi_kD_a\overline{\phi}_j\;dx
    +\int_{\mathbb{R}^d}x_a\overline{\phi}_j{\phi}_k\;dx\int_{\mathbb{R}^d}\overline{\phi}_kD_a{\phi}_j\;dx\\
    \end{split}
\end{equation}
Next we apply the gap formula
\begin{equation}\label{HS-trace-bis}
    (E_k-E_j)\int_{\mathbb{R}^d}x_a\phi_j\overline{\phi}_k\;dx=-2\alpha
    \int_{\mathbb{R}^d}(D_a\phi_j)\overline{\phi}_k\;dx.
\end{equation}
We note that the r.h.s is zero for degenerate eigenvalues.
Therefore after summing over all coordinates in \eqref{canon-comm}
we get
\begin{equation}\label{canon-comm-2}
    d=4\alpha\sum_{E_k}\frac{T_{jk}}{E_k-E_j}.
\end{equation}
Multiplying \eqref{canon-comm-2} by $f(E_j)$, summing over $j$ and
symmetrizing the double sum we finally obtain
\begin{equation*}
    d\sum_{E_j}f(E_j)+2\alpha\sum_{E_j}\sum_{E_k}T_{jk}\frac{f(E_k)-f(E_j)}{E_k-E_j}=0.
\end{equation*}
The assertion follows then by the fundamental theorem of calculus.
\end{proof}
Applying theorem \ref{HS-trace-formula} to $f(E)=(z-E)^2$ for
$E<z$ and $f(E)=0$ otherwise we recover \eqref{HS-quadratic} with
$G$ being the multiplication operator $x_a$ after summing over all
coordinates as shown in \cite{HaSt, HaSt2}:
\begin{equation}\label{HS-quadratic-bis}
    \sum_{E_j<z}d(z-E_j)^2-4\alpha(z-E_j)T_j
    =4 \sum_{E_j<z}\sum_{E_k\geq z}T_{jk}
    \frac{(z-E_j)(z-E_k)}{E_k-E_j}
\end{equation}
with
\begin{equation*}
    T_j=\sum_{E_k}T_{jk}=\int_{\mathbb{R}^d}|\nabla\phi_j|^2\;dx.
\end{equation*}
\begin{rem} Formula \eqref{canon-comm-2} can also be easily
derived from second order perturbation theory. Indeed, for a fixed
vector $v\in\mathbb{R}^d$ consider the operator
$H=(-i\sqrt{\alpha}\,\nabla+\gamma v)^2+V(x)$. Obviously, the
addition of a constant vector field does not change the
eigenvalues and second order perturbation (i.e first order in
$\gamma^2v^2$ and second order in
$-2i\sqrt{\alpha}\,\gamma\,v\nabla$ yields \eqref{canon-comm-2}
when choosing $v$ to be the canonical unit vectors $e_a$ and then
summing over all $a=1,\ldots d$. The author thanks R. Seiringer
for indicating this proof.
\end{rem}
Choosing $f$ appropriately in theorem \ref{HS-trace-formula} we
may now prove our main results.
\begin{proof}[Proof of Theorem \ref{main-theorem}]
Without loss of generality we may suppose that $V\in
C_0^{\infty}$. Hence for any $\alpha>0$ the operator $H(\alpha)$
has at most a finite number of eigenvalues. Obviously, the r.h.s
in \eqref{HS-quadratic-bis} is negative. Making the dependence on
the parameter $\alpha$ explicit we have therefore for all $z\leq
0$ the inequality
\begin{equation}\label{HS-quadratic-ineq}
    \alpha\sum_{E_j(\alpha)<0}(z-E_j(\alpha))^2-\frac{4}{d}\;\alpha^2\sum_{E_j(\alpha)<0}(z-E_j(\alpha))T_j(\alpha) \leq
    0.
\end{equation}
The functions $E_j(\alpha)$ are non positive, continuous and
increasing. Furthermore, let $\infty>\alpha_1\geq
\alpha_2\geq\ldots\geq \alpha_k\geq\ldots >0$ denote the values at
which $E_j(\alpha)$ appears. $E_j(\alpha)$ is continuously
differentiable for $\alpha\neq \alpha_k$ and by the
Feynman-Hellmann theorem
\begin{equation}\label{FH}
    \frac{d}{d\alpha}\;E_j(\alpha)=T_j(\alpha).
\end{equation}
Taking $z=0$, inequality\eqref{HS-quadratic-ineq} then reads
\begin{equation*}
    \alpha\sum_{E_j(\alpha)<0}(-E_j(\alpha))^2+\frac{2}{d}\;\alpha^2\frac{d}{d\alpha}\sum_{E_j(\alpha)<0}(-E_j(\alpha))^2 \leq
    0.
\end{equation*}
For any $\alpha\in\;]\alpha_{N+1},\alpha_{N}[$ the number of
eigenvalues is constant and therefore
\begin{equation*}
     \frac{d}{d\alpha}\bigg(\alpha^{\frac{d}{2}}\;
    \sum_{E_j(\alpha)<0}(-E_j(\alpha))^{2}\bigg)\leq 0
\end{equation*}
proving the theorem.
\end{proof}
\begin{rem} Strictly speaking the Feynman-Hellmann theorem only
holds for nondegenerate eigenvalues. In the case of degenerate
eigenvalues one has to take the right basis in the corresponding
eigenspace and to change the numbering if need be (see e.g.
\cite{Th}).

\end{rem}
\begin{proof}[Proof of Theorem \ref{main-theorem-2}]
Choose $f(E)=e^{-tE}$ and $t>0$. Since $f'(E)=-tf(E)$ is concave
it follows that
\begin{equation*}
    f'(sE_j+(1-s)E_k)\geq sf'(E_j)+(1-s)f'(E_k).
\end{equation*}
Using the symmetry of $T_{jk}$ we get

\begin{equation*}
    d\sum_{E_j(\alpha)}f(E_j(\alpha))+2\alpha
    \sum_{E_j(\alpha)}f'(E_j(\alpha))T_j(\alpha)\leq 0
\end{equation*}
and we conclude as in the proof of \ref{main-theorem}.
\end{proof}
\section{Extensions and Discussion}
It has already been shown in \cite{HaSt, HaHe, HaSt2} that one can
obtain trace inequalities for the functions $f(E)=(z-E)^{\sigma}$
with $\sigma\geq 2$. In fact we have the following result:
\begin{cor}
Let $f:\mathbb{R}\rightarrow \mathbb{R}$ be a $C^1$ function with
support on the negative half axis such that $f'$ is concave. Under
the conditions of theorem \ref{main-theorem} the mapping
\begin{equation}\label{H-alpha-general-f}
 \alpha\mapsto   \alpha^{\frac{d}{2}}\;
    \sum_{E_j(\alpha)<0}f(E_j(\alpha))
\end{equation}
is non increasing for $\alpha>0$. In particular, for all
$\sigma\geq 2$
\begin{equation}\label{diff-ineq-sigma}
     \alpha\mapsto \alpha^{\frac{d}{2}}\;
    \sum_{E_j(\alpha)<0}(-E_j(\alpha))^{\sigma}
\end{equation}
is non increasing for $\alpha>0$. Consequently,
\begin{equation}\label{LT-sigma-sharp}
     \alpha^{\frac{d}{2}}\;
    \sum_{E_j(\alpha)<0}(-E_j(\alpha))^{\sigma}\leq
    L_{\sigma,d}^{cl}\int_{\mathbb{R}^d}(-V(x))^{\sigma+\frac{d}{2}}\;dx.
\end{equation}
\end{cor}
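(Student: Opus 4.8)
The plan is to generalize the argument used in the proofs of Theorems \ref{main-theorem} and \ref{main-theorem-2}, recognizing that both are special cases of the stated corollary. The starting point is the trace formula \eqref{HS-trace} applied to the given $f$, which is $C^1$ with support on the negative half axis and has $f'$ concave. First I would invoke the concavity of $f'$ to obtain the pointwise inequality
\begin{equation*}
    f'(sE_j+(1-s)E_k)\geq sf'(E_j)+(1-s)f'(E_k)
\end{equation*}
for all $s\in[0,1]$, exactly as in the proof of Theorem \ref{main-theorem-2}. Integrating over $s\in[0,1]$ gives $\int_0^1 f'(sE_j+(1-s)E_k)\,ds\geq \tfrac12(f'(E_j)+f'(E_k))$. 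Substituting this lower bound into \eqref{HS-trace}, and using the symmetry $T_{jk}=T_{kj}$ to collapse the symmetrized double sum, yields
\begin{equation*}
    d\sum_{E_j(\alpha)<0}f(E_j(\alpha))+2\alpha\sum_{E_j(\alpha)<0}f'(E_j(\alpha))\,T_j(\alpha)\leq 0,
\end{equation*}
where I have used $T_j=\sum_k T_{jk}$. The sign of the inequality requires $\alpha>0$ so that multiplying the nonnegative double-sum correction preserves direction.

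Next I would interpret the second term via the Feynman–Hellmann relation \eqref{FH}, $\tfrac{d}{d\alpha}E_j(\alpha)=T_j(\alpha)$, which by the chain rule gives $f'(E_j(\alpha))\,T_j(\alpha)=\tfrac{d}{d\alpha}f(E_j(\alpha))$ on each interval where the eigenvalue count is constant. The displayed inequality then reads
\begin{equation*}
    d\sum_{E_j(\alpha)<0}f(E_j(\alpha))+2\alpha\,\frac{d}{d\alpha}\sum_{E_j(\alpha)<0}f(E_j(\alpha))\leq 0,
\end{equation*}
which is precisely the differential inequality $\tfrac{d}{d\alpha}\bigl(\alpha^{d/2}\sum f(E_j(\alpha))\bigr)\leq 0$ after multiplying by $\tfrac{1}{2}\alpha^{d/2-1}$. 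As in the proof of Theorem \ref{main-theorem}, this establishes monotonicity on each interval $]\alpha_{N+1},\alpha_N[$ where the number of negative eigenvalues is constant; combined with the continuity of the trace across the thresholds $\alpha_k$ (at which an eigenvalue enters at $E=0$, where $f$ and $f'$ vanish by the support hypothesis), the mapping \eqref{H-alpha-general-f} is non increasing on all of $\alpha>0$.

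For the particular case I would verify that $f(E)=(-E)^\sigma$ for $E<0$ (and $0$ otherwise) satisfies the hypotheses when $\sigma\geq 2$: then $f'(E)=-\sigma(-E)^{\sigma-1}$, and $f''(E)=\sigma(\sigma-1)(-E)^{\sigma-2}$, $f'''(E)=-\sigma(\sigma-1)(\sigma-2)(-E)^{\sigma-3}\leq 0$ for $E<0$, so $f'$ is indeed concave. The condition $\sigma\geq 2$ is exactly what guarantees $f\in C^1$ and $f'$ concave across the matching at $E=0$. Finally, the consequence \eqref{LT-sigma-sharp} follows by passing to the semiclassical limit $\alpha\to 0^+$: monotonicity gives $\alpha^{d/2}\sum(-E_j(\alpha))^\sigma\leq \lim_{\alpha\to 0^+}\alpha^{d/2}\sum(-E_j(\alpha))^\sigma$, and this limit equals $L_{\sigma,d}^{cl}\int(-V)^{\sigma+d/2}\,dx$ by \eqref{sc-limit}. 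The main obstacle I anticipate is the careful justification at the threshold values $\alpha_k$: one must confirm that the trace and its relevant one-sided derivatives behave correctly as eigenvalues cross zero, and that the finiteness of all sums (guaranteed by the reduction to $V\in C_0^\infty$ as in Theorem \ref{main-theorem}) legitimizes the termwise manipulations; the concavity step itself and the Feynman–Hellmann substitution are otherwise routine adaptations of the two theorems already proved.
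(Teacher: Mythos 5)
Your proposal is correct and follows essentially the same route as the paper: insert the concavity bound $\int_0^1 f'(sE_j+(1-s)E_k)\,ds\geq\tfrac12\bigl(f'(E_j)+f'(E_k)\bigr)$ into the trace formula \eqref{HS-trace}, use the symmetry of $T_{jk}$ together with $f'(E)=0$ for $E\geq 0$, and then apply the Feynman--Hellmann theorem exactly as in Theorem \ref{main-theorem}. The additional details you supply (the explicit check that $f(E)=(-E)^{\sigma}$ satisfies the hypotheses for $\sigma\geq 2$, the behavior at the thresholds $\alpha_k$, and the passage to the semiclassical limit \eqref{sc-limit}) are all consistent with what the paper leaves implicit.
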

\begin{proof}
Consider the trace formula \eqref{HS-trace}. The concavity of $f'$
implies that
\begin{equation*}
    \int_0^1f'(sE_j+(1-s)E_k)\;ds\geq \frac{1}{2}f'(E_j)+\frac{1}{2}f'(E_k)
\end{equation*}
Using the symmetry of $T_jk$ and  $f'(E)=0$ for $E\geq 0$ we get
\begin{equation*}
    d\sum_{E_j(\alpha)<0}f(E_j(\alpha))+2\alpha
    \sum_{E_j(\alpha)<0}f'(E_j(\alpha))T_j(\alpha)\leq 0
\end{equation*}
As in the proof of theorem \ref{main-theorem} we use the
Feynman-Hellmann theorem to prove \eqref{H-alpha-general-f}.
\end{proof}
\begin{rem}The sharp Lieb-Thirring inequality
\eqref{LT-sigma-sharp} follows also from \eqref{LT-2-sharp} of
theorem \ref{main-theorem} via the Aizenman-Lieb monotonicity
principle \cite{AL}. However, the monotonicity of the mapping
\eqref{diff-ineq-sigma} is a stronger and new result.
\end{rem}
\begin{rem}
We cannot expect that the monotonicity holds for moments with
$\sigma\leq 2$. For example, consider the $d$-dimensional harmonic
oscillator with eigenvalues
$E_j(\alpha)=\sqrt{\alpha}\;(2j_1+\ldots 2j_d+d)$ for natural
numbers $j_1\ldots 2j_d$. We want to study behavior of the
eigenvalue moments
\begin{equation*}
    S_{\sigma}(\alpha)=\sum_{E_j(\alpha)<1}(1-E_j(\alpha))^{\sigma}.
\end{equation*}
Then for all $\alpha\in[(d+2)^{-2},d^{-2}]$ we have
\begin{equation*}
    \alpha^{\frac{d}{2}}S_{\sigma}(\alpha)= \alpha^{\frac{d}{2}}(1-d\sqrt{\alpha})^{\sigma}.
\end{equation*}
It is easy to see that its  derivative (w.r.t. $\alpha$) is
strictly positive at $\alpha=(d+2)^{-2}$ for all $0\leq\sigma< 2$.
This behavior persists also for the sum of the first two
eigenvalues. Indeed, for all $\alpha\in[(d+4)^{-2},(d+2)^{-2}]$ we
have (taking into account the multiplicity of the second
eigenvalue)
\begin{equation*}
    S_{\sigma}(\alpha)=(1-d\sqrt{\alpha})^{\sigma}+d(1-(d+2)\sqrt{\alpha})^{\sigma}.
\end{equation*}
Then the function
$p_{\sigma}(\alpha):=\alpha^{\frac{d}{2}}S_{\sigma}(\alpha)$ has a
strictly positive derivative at $\alpha=(d+4)^{-2}$ for all
$0\leq\sigma<2$. Obviously, for $\sigma=2$ the derivatives at
these points vanish.
\end{rem}

\bibliographystyle{amsplain}
\bibliography{references}

\end{document}